\documentclass[prd,article,groupedaddress]{revtex4}

\usepackage[dvips]{graphicx}
\usepackage[]{caption}
\usepackage{amsmath}
\usepackage{amssymb}
\usepackage{amsthm}
\usepackage{url}
\usepackage{dcolumn}
\usepackage{latexsym}
\usepackage{color}
 \pagenumbering{arabic}



\def\ket#1{{\lvert}#1\rangle}


\begin{document}
\newtheorem{df}{Definition}
\newtheorem{theo}[df]{Theorem}
\newtheorem{lem}[df]{Lemma}
\newtheorem{cor}[df]{Corollary}
\newtheorem{rema}[df]{Remark}
\newtheorem{nota}[df]{Notation}
\newtheorem{prop}[df]{Proposition}
\newtheorem{prob}[df]{Problem}

\title{The GHZ state in secret sharing and entanglement simulation}
\author{Anne Broadbent $^{1,2)}$\footnote{albroadb@iqc.ca}, Paul Robert Chouha $^{1)}$\footnote{chouhapr@iro.umontreal.ca} and Alain Tapp $^{1)}$\footnote{tappa@iro.umontreal.ca}}
\affiliation{1) D\'epartement IRO, Universit\'e de Montr\'eal, C.P.~6128, succursale centre-ville, Montr\'eal (Qu\'ebec),
Canada H3C 3J7, CANADA}
\affiliation{2) Institute for Quantum Computing,
University of Waterloo,
200 University Ave.~W., Waterloo, ON, N2L 3G1, CANADA}
\date{September 10, 2008}
\preprint{hep-th/yymmnnn}
\pacs{}
\begin{abstract}
In this note, we study some properties of the GHZ~state.
First, we present a quantum secret sharing scheme 
in which the participants require only
classical channels in order to reconstruct the secret; our protocol is significantly  more efficient than the trivial usage of teleportation. Second, we show that the classical simulation of an $n$-party GHZ state requires at least $n\log_2 n - 2n$ bits of communication.  Finally, we present a problem simpler than the complete simulation of the multi-party GHZ state, that could lead to a no-go theorem for GHZ state simulation.
\end{abstract}
\maketitle

\section{Introduction}

The \emph{GHZ state} (also called \emph{cat} state) was  introduced by
 Daniel M. Greenberger, Michael A. Horne and Anton Zeilinger~\cite{GHZ}  as a new way of proving Bell's Theorem~\cite{Bell64}. The $n$-party version of the GHZ state is given by
\begin{equation*}
\ket{+^n} = \frac{1}{\sqrt{2}} \overbrace{\ket{00\ldots0}}^n +    \frac{1}{\sqrt{2}} \overbrace{\ket{11\ldots1}}^n  =\frac{1}{\sqrt{2}} \ket{0^n} +    \frac{1}{\sqrt{2}} \ket{1^n}  \,.
\end{equation*}

As the most frequently used multi-party entangled state, the GHZ state has appeared in applications such as nonlocality~\cite{Mermin90}, communication complexity~\cite{Tappa:1999A} and multi-party cryptography~\cite{AnonQuantum}.

Our contribution deals with the GHZ state in two scenarios. In Section~\ref{sec:secret sharing}, we show that in the context of \emph{quantum secret sharing}, the GHZ state can be used to implement an $((n,n))$-threshold scheme where the reconstruction of the secret requires only classical communication and is more efficient than the obvious protocol based quantum teleportation. In Section~\ref{sec:simulation}, we show that for the task of classical \emph{entanglement simulation}, the communication required to simulate an $n$-party GHZ state is lower-bounded by  \mbox{$n\log n -2n$}. This is an improvement on the previously known \mbox{$n\log_2 n -3n$} lower bound~\cite{BHMR03}. The general question of the feasibility of GHZ simulation is still open and, still in Section~\ref{sec:simulation}, we give a necessary condition for the task to be achievable.

\section{Secret Sharing}
\label{sec:secret sharing}

An \emph{$(n,t)$-threshold secret sharing scheme}\hspace{1mm} is a protocol by which a \emph{dealer} distributes \emph{shares} of his \emph{secret} to $n$ players such that, when combining their shares,  any subset of~$t$ or more players is able to recover the secret, while any subset of \emph{less} than~$t$ players is unable to gain any information on the secret. Classical secret sharing was independently introduced by  George Blakley~\cite{Blakley} and  Adi Shamir~\cite{Shamir}. Following the literature, we denote a \emph{quantum} threshold secret sharing scheme by~$((n,t))$~\cite{QuantumSS}, while reserving (n,t) for classical schemes.
In quantum secret sharing, it is in general assumed that, in order to reconstruct the secret,  the players have access to quantum channels. Here, we concentrate on the case where
the players \emph{do not} share quantum channels (they do however have a quantum channel with the dealer).  There is an obvious way for the  players to adapt to this restricted scenario:  quantum teleportation~\cite{teleport} enables the conversion of any  standard quantum secret sharing scheme into one with only classical communication during the reconstruction phase. This procedure substitutes each qubit of communication with two bits of communication coupled with  a pre-distributed maximally entangled two-qubit state:
\begin{equation}
\ket{\Psi^\text{-}} = \frac{1}{\sqrt{2}}\ket{01} - \frac{1}{\sqrt{2}} \ket{10}\,.
\end{equation}

\begin{theo}
\label{thm:naive}
In the teleportation-based version of a one-qubit secret sharing scheme, $\frac{n^2 -n}{2}$ shared states~$\ket{\Psi^\text{-}}$ are necessary and sufficient for the reconstruction of the secret.
\end{theo}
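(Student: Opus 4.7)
The plan is to prove the statement by separately establishing sufficiency and necessity. For sufficiency I will exhibit an explicit teleportation-based protocol that consumes exactly $\frac{n^2-n}{2}$ shared states $\ket{\Psi^\text{-}}$; for necessity I will argue that any protocol obtained by faithfully replacing the quantum communication of the original scheme by teleportation must use at least this many.

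For the sufficiency direction, the natural protocol is a cascade. In the original $((n,n))$ scheme with quantum channels, the players pool their $n$ qubit shares at a designated reconstructor, who then applies the decoding unitary. Replacing each qubit transmission by a teleportation using one shared $\ket{\Psi^\text{-}}$, the cleanest schedule is the following: at step $i$ (for $i=1,\ldots,n-1$), player $i$ teleports the $i$ qubits currently in their possession to player $i+1$, consuming $i$ Bell pairs. After $n-1$ steps, player $n$ holds all $n$ shares and performs the local decoding. Summing,
\[
\sum_{i=1}^{n-1} i \;=\; \frac{n(n-1)}{2} \;=\; \frac{n^2-n}{2},
\]
so this many copies of $\ket{\Psi^\text{-}}$ suffice.

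For the necessity direction, the key observation is that each teleportation of a single qubit consumes exactly one copy of $\ket{\Psi^\text{-}}$, so the total Bell pair count equals the number of qubits transmitted across party boundaries during the reconstruction. At step $i$ of the protocol above, the accumulated $i$-qubit state must move intact to the next player before the decoding can proceed, forcing exactly $i$ Bell pairs at that step and $\binom{n}{2}$ in total. I expect the main subtlety to lie in the precise formulation of ``the teleportation-based version'': one must fix a notion of faithful simulation of the original reconstruction circuit so that the lower bound cannot be evaded by redesigning the information flow from scratch. Once this is done, the count of Bell pairs in the teleportation-based version coincides with the count of qubit transmissions in the original reconstruction, and the bound is tight.
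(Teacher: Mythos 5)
There is a genuine gap in your necessity argument, and it comes from mis-identifying where the lower bound originates. Your argument only counts the Bell pairs consumed by your own cascade protocol; it does not rule out other information flows. Worse, under the framing you propose (a faithful teleportation-based simulation of a reconstruction with a fixed receiver), the claimed bound is simply false: if the receiver is already known when the entanglement is distributed, each of the other $n-1$ players teleports their single-qubit share directly to the receiver, consuming only $n-1$ states~$\ket{\Psi^\text{-}}$. The point the paper relies on, and which your proposal misses, is that the states~$\ket{\Psi^\text{-}}$ must be distributed \emph{before} the reconstruction phase (the players share no quantum channels among themselves), whereas the identity of the receiver is decided only \emph{during} reconstruction. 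Hence the pre-shared entanglement, viewed as a graph on the players, must allow \emph{any} player to end up holding the other $n-1$ shares; since each $\ket{\Psi^\text{-}}$ carries exactly one teleported qubit into the receiver, the receiver's vertex must have degree at least $n-1$, and since every vertex is a potential receiver, counting degrees gives at least $n(n-1)/2$ pairs. Sufficiency is then witnessed by the complete graph $K_n$, not by your cascade.

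A second, related problem: your cascade does not establish sufficiency in the intended model either. The pre-distribution it requires (one pair on edge $(1,2)$, two on $(2,3)$, \dots, $n-1$ on $(n-1,n)$) only supports player $n$ as the receiver; if the players later designate player $1$, then player $2$ would have to forward $n-1$ qubits across an edge holding a single pair. With $K_n$ the schedule is also simpler: whichever player is chosen, each of the remaining $n-1$ players uses the one pair they share directly with that player. I would drop the cascade and the ``faithful simulation of the original circuit'' framing entirely, and instead state the model explicitly: entanglement is distributed in advance, and the receiver is chosen adversarially afterwards.
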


\begin{proof}
Because each participant is potentially the receiver of the secret, each participant must be linked to every other participant by at least one disjoint path consisting of states~$\ket{\Psi^\text{-}}$. Seeing the participants as vertices and the shared entanglement as edges, we have that each vertex  must have degree~$\geq n-1$. Counting the degree at each vertex yields a lower bound of~$n(n-1)/2$ for the total number of edges. Since the complete graph, $K_n$, satisfies our criteria, we have the desired result.
\end{proof}

If we add to Theorem~\ref{thm:naive} the requirement that each share of the secret contain a qubit, the total number of qubits required for a teleportation-based scheme is~$n^2$.
In sharp contrast, our protocol requires only a single shared multi-party state, each player holding a single qubit, for a total of~$n$ qubits. This is sufficient for \emph{both} the shares and the reconstruction. Since quantum memory is one of the most challenging aspects of experimental quantum physics, our protocol could lead to interesting implementations.
Damian Markham and Barry C.~Sanders have recently independently proposed a quantum secret sharing scheme which also uses an underlying $n$-party entangled state and only requires classical communication to reconstruct the secret~\cite{MS08}. Their approach is based on the graph state formalism.

\subsection{Protocol for Quantum Secret Sharing with Classical Reconstruction}

We now present our $((n,n))$-threshold  \emph{Quantum Secret Sharing with Classical Reconstruction} (QSS-CR) protocol. Suppose the dealer  wishes to share the quantum secret state $\ket{\Psi}=\alpha \ket{0}+\beta\ket{1}$.

\begin{enumerate}
\item\textbf{Partial encryption.} \label{step:encryption}
The dealer  chooses  uniformly at random $x \in \{0,1\}$. If $x=0$, he does nothing to~$\ket{\Psi}$ for this step.  If $x=1$, he applies the negation transformation, $N$:
  \begin{equation}
N=\left(
   \begin{array}{cc}
     0 & 1 \\
     1 & 0 \\
   \end{array}
 \right)\,. \end{equation}

  Let the resulting state be $\ket{\Psi'}= \alpha'\ket{0} + \beta' \ket{1}$\,.

\item \textbf{Expansion.} \label{step:encoding}
The dealer expands~$\ket{\Psi'}$ into an $n$-qubit state by creating $n-1$ \emph{pseudo-copies}; the resulting state~is:
 \begin{equation}
 \ket{\Psi''}  =  \alpha'\ket{0^n} + \beta' \ket{1^n}
 \end{equation}

\item \textbf{Distribution.} \label{step:distribution}
The dealer picks uniformly at random a bit string $x=x_1x_2\ldots x_n$ with $\bigoplus_{i=1}^n x_i = x$. Player~$i$'s share consists of bit~$x_i$ as well as of qubit~$i$ of~$\ket{\Psi''}$.

\item \textbf{Reconstruction} \label{step:reconstruction}
The players decide who will receive the secret; say they agree on player~1.\\

\begin{itemize}
\item  Player $i$ $(i = 2,3, \ldots, n$) applies the Hadamard transform~$H$ to his qubit:
    \begin{equation}
H= \frac{1}{\sqrt{2}}\left(
   \begin{array}{cc}
     1 & 1 \\
     1 & $-$1 \\
   \end{array}
 \right)\,. \end{equation}

\item Player $i$  $(i = 2,3, \ldots, n)$ measures his qubit in the computational basis. Let the outcome be $y_i$; this value, along with~$x_i$ is sent to player~1.

\item Player 1 computes $y = \bigoplus_{i=2}^n y_i$. If $y=0$, he does nothing. If $y=1$, he applies $Z$ to his qubit:
      \begin{equation} Z=
\left(
   \begin{array}{cc}
     1 & 0 \\
     0 & $-$1 \\
   \end{array}
 \right)\,. \end{equation}

\item
Player~1 computes $x=\bigoplus_{i=1}^n x_i$. If~$x=0$, he does nothing. If $x=1$, he applies~$N$ to his qubit. The result is the reconstructed secret.

\end{itemize}

\end{enumerate}

\subsection{Correctness and Privacy}

We now show that our QSS-CR protocol produces the correct output~(Theorem~\ref{theorem:correct}) and is secure against collusions of less than~$n$ players (Theorem~\ref{theorem:privacy}). The proof of the following theorem follows from the properties of the $GHZ$ state. \looseness=-1

\begin{theo}
\label{theorem:correct}
At the end of the QSS-CR protocol, the receiver has the initial quantum state~$\ket{\Psi}$.
\end{theo}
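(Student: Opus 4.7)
The plan is to track the joint state of the $n$ qubits through each step of the reconstruction, using only the definitions of $H$, $Z$, $N$ and the fact that the state after expansion is an ``encrypted'' GHZ-like state. Since the dealer's partial encryption followed by reconstruction's final negation is undone by the relation $N^2 = I$, the main work is to show that the middle three reconstruction substeps turn $\ket{\Psi''} = \alpha' \ket{0^n} + \beta' \ket{1^n}$ into player 1 holding $\alpha' \ket{0} + \beta' \ket{1}$.

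First I would apply $H^{\otimes(n-1)}$ to qubits $2,\ldots,n$, using $H^{\otimes(n-1)}\ket{0^{n-1}} = 2^{-(n-1)/2}\sum_{y}\ket{y}$ and $H^{\otimes(n-1)}\ket{1^{n-1}} = 2^{-(n-1)/2}\sum_{y}(-1)^{|y|}\ket{y}$, to rewrite the state as
\begin{equation*}
\frac{1}{2^{(n-1)/2}} \sum_{y \in \{0,1\}^{n-1}} \bigl(\alpha' \ket{0} + (-1)^{|y|} \beta' \ket{1}\bigr) \otimes \ket{y}\,.
\end{equation*}
Next I would observe that measuring qubits $2,\ldots,n$ in the computational basis yields each outcome $y = y_2\ldots y_n$ with equal probability $2^{-(n-1)}$, and collapses player 1's qubit to $\alpha' \ket{0} + (-1)^{|y|} \beta' \ket{1}$. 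Since $|y| \bmod 2 = \bigoplus_{i=2}^n y_i$, applying $Z$ exactly when this parity is $1$ removes the sign and leaves player 1 with $\ket{\Psi'} = \alpha' \ket{0} + \beta' \ket{1}$.

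Finally I would note that player 1 can compute the dealer's encryption bit $x = \bigoplus_{i=1}^n x_i$ from the shared classical bits (by the sharing scheme in step~\ref{step:distribution}), and that applying $N$ iff $x=1$ inverts the partial encryption from step~\ref{step:encryption}, yielding $\ket{\Psi}$. The only subtle point — and the one I would be careful to state explicitly — is that the classical bits $x_1,\ldots,x_n$ are independent of the quantum operations, so the reconstruction succeeds deterministically regardless of the measurement outcomes $y_i$; everything else is routine linear-algebraic bookkeeping on tensor products, with no real obstacle.
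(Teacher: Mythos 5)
Your proof is correct and is exactly the calculation the paper has in mind: the paper gives no written proof, stating only that the result ``follows from the properties of the GHZ state,'' and your Hadamard-then-measure analysis with the $Z$ correction for the parity $\bigoplus_{i=2}^n y_i$ followed by the $N^x$ decryption (using $N^2=I$) fills in those details accurately and in the order the protocol prescribes.
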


\begin{theo}
\label{theorem:privacy}
In the QSS-CR protocol, any subset of $s<n$ players cannot learn anything about~$\ket{\Psi}$.
\end{theo}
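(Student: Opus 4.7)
The plan is to fix an arbitrary subset $S$ of $s<n$ colluding players, write down the joint classical-quantum state in their possession (averaged over all the dealer's randomness), and show it does not depend on $\alpha,\beta$. Since any measurement or processing the colluders perform acts only on this joint state, independence from $(\alpha,\beta)$ implies they learn nothing about $\ket{\Psi}$. Let $T$ denote the remaining $n-s\geq 1$ players, whose registers get traced out.

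First I would handle the classical shares. The dealer chooses $x\in\{0,1\}$ uniformly and then picks $x_1,\ldots,x_n$ uniformly among strings with $\bigoplus_i x_i = x$, so marginally $(x_1,\ldots,x_n)$ is uniform on $\{0,1\}^n$. Consequently the marginal distribution of $(x_i)_{i\in S}$ is uniform on $\{0,1\}^s$ and is independent of~$x$ (for every value of $x$, the $n-s\geq 1$ remaining bits can be chosen to meet the parity constraint, so the conditional distribution on $S$'s bits is still uniform). Hence the classical shares reveal nothing, neither about $x$ nor about~$\ket{\Psi}$.

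Next I would compute the reduced density matrix on $S$'s qubits. Starting from $\ket{\Psi''}=\alpha'\ket{0^n}+\beta'\ket{1^n}$, tracing out the $n-s$ qubits held by $T$ yields
\begin{equation*}
\rho_S^{(x)} \;=\; |\alpha'|^2\ket{0^s}\bra{0^s} \;+\; |\beta'|^2\ket{1^s}\bra{1^s},
\end{equation*}
since the off-diagonal terms carry a factor $\langle 0^{n-s}|1^{n-s}\rangle$ that vanishes precisely because $n-s\geq 1$. Here $(\alpha',\beta')=(\alpha,\beta)$ if $x=0$ and $(\alpha',\beta')=(\beta,\alpha)$ if $x=1$. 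Because the colluders have no information about $x$ (by the previous paragraph) while $x$ is uniform, their effective reduced state is
\begin{equation*}
\rho_S \;=\; \tfrac{1}{2}\rho_S^{(0)} + \tfrac{1}{2}\rho_S^{(1)} \;=\; \tfrac{1}{2}\bigl(\ket{0^s}\bra{0^s}+\ket{1^s}\bra{1^s}\bigr),
\end{equation*}
which is manifestly independent of $\alpha$ and $\beta$. Combined with the uniform independent classical shares, the full joint state of $S$ is a fixed state, so no collusion of fewer than $n$ players can extract any information about~$\ket{\Psi}$.

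The main conceptual point, and the only step that requires care, is separating the two ingredients that provide privacy: the classical one-time pad $x$ is what symmetrizes the diagonal entries $|\alpha|^2$ and $|\beta|^2$, while the multipartite structure of $\ket{\Psi''}$ (specifically, needing to trace at least one qubit) is what kills the coherences. If either mechanism were removed, some residual dependence on $\ket{\Psi}$ would survive; verifying that both are available whenever $s<n$ is the crux.
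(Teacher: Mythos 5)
Your proposal is correct and follows essentially the same two-step argument as the paper: first that the colluders' classical bits are uniform and independent of $x$, then that their quantum state is independent of the secret once the at-least-one traced-out qubit kills the coherences and the average over $x$ symmetrizes the populations. The only difference is presentational — you compute the reduced density matrix explicitly for an arbitrary subset $S$, whereas the paper compresses the $n-1$ players' state coherently to a single qubit and observes it is totally mixed — but the underlying mechanism is identical.
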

\begin{proof} Without loss of generality,
suppose players $1,2, \ldots ,{n-1}$ share their secrets. We now show that their joint state is independent of the initial shared secret,~$\ket{\Psi}$.
To do this, first note that the classical bits $x_1, x_2, \ldots x_{n-1}$ are uniformly distributed over all possible combinations (and independent of everything else) and in particular they reveal nothing about~$x$.
Next, note that since $\ket{\Psi''}$ is either $\alpha\ket{0^n} + \beta\ket{1^n}$ or $\beta\ket{0^n} + \alpha\ket{1^n}$ (with equal probability), the $n-1$ players can collaborate to coherently transform their joint system into a tensor product of an unknown 1-qubit state and a known $n-2$ qubit state. The unknown qubit is in the totally mixed state; it thus does not contain any information about~$\ket{\Psi}$.
\end{proof}


\section{Classical Simulation of the GHZ State}
\label{sec:simulation}

It is well known that entanglement gives rise to correlations that are not achievable by spacelike-separated parties that are allowed only prior shared randomness~\cite{Bell64}. In the study of entanglement simulation, we ask: what \emph{extra} resources are sufficient for the parties to produce correlations \emph{as if} they shared a given entangled state? 
In the case of the simulation of the maximally entangled two-qubit state~$\ket{\Psi^\text{-}}$, a single bit of communication is sufficient~\cite{TonerBacon}; the same result can also be achieved with a single use of a nonlocal box~\cite{PR,CGMP05}. In contrast to these important results, relatively little is known about the simulation of the GHZ~state, in particular it is still an open question whether or not simulation with finite communication is possible.

In Theorem~\ref{theorem:lower bound}, we give a lower bound on the number of classical bits required to simulate an $n$-party GHZ~state. Our work improves (by $n$ bits) a previous lower bound of \mbox{$n\log_2 n -3n$}~\cite{BHMR03}; our simple method is new and could provide insight into the general task of entanglement simulation.  While we still do not have an answer to the question of the existence of a simulation protocol, we now know that \emph{if} a protocol exists, it would require at least  \mbox{$n\log n_2 -2n$} bits of communication. The question of the existence of a classical simulation of the GHZ state is addressed in Section~\ref{sec:simulation}, where we give a necessary condition for a simulation to exist~(Theorem~\ref{theorem:necessary condition}).

\subsection{Lower Bound on the GHZ State Simulation}
\emph{Communication complexity} is the study of the amount of communication required in order for players to accomplish a distributed task (see, for instance~\cite{BT08}). We are interested here in the model where the complexity is counted as the number of bits that must be \emph{broadcasted} in order for every party to know the exact value of $f$ for a given input. In this section, we make links between communication complexity results and entanglement simulation. We first recall the following theorem:

\begin{theo}[\cite{Tappa:1999A}]
\label{theorem:lower bound}
There exists an
$n$-variable Boolean function~$f$ taking as inputs $k$-bit binary strings ($k > \log_2 n$)
  which,  without entanglement, has communication complexity of at least \mbox{$n\log_{2}n-n$}  bits while if the parties share prior quantum entanglement given as a GHZ~state,
 the communication complexity is $n$ bits. Furthermore, the strategy involving quantum entanglement consists of an initial round of local measurements followed by an exchange of classical messages. \\
\end{theo}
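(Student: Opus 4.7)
The plan is to exhibit a single function $f$ that witnesses both the quantum upper bound and the classical lower bound simultaneously, which is essentially the construction of Buhrman, van Dam, H\o yer and Tapp. The natural candidate is a promise version of a modular sum: fix $N=2^k$ with $k>\log_2 n$; each party~$i$ receives an input $x_i\in\{0,1,\ldots,N-1\}$ encoded as a $k$-bit string, under the promise that $\sum_i x_i\equiv 0\pmod{N/2}$, and one sets $f(x):=(2/N)\sum_i x_i \bmod 2$. The condition $k>\log_2 n$ ensures $N>n$, so the promise set is rich enough that no player's input is redundant.

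For the quantum upper bound, I would use the GHZ state as a phase register. Starting from $\ket{+^n}$, each player locally applies the diagonal unitary $\mathrm{diag}(1,e^{2\pi i x_i/N})$ to his qubit; the state becomes
\begin{equation*}
\tfrac{1}{\sqrt{2}}\bigl(\ket{0^n}+e^{2\pi i (\sum_i x_i)/N}\ket{1^n}\bigr).
\end{equation*}
Under the promise the accumulated phase equals $(-1)^{f(x)}$, so the state simplifies to $\tfrac{1}{\sqrt{2}}(\ket{0^n}+(-1)^{f(x)}\ket{1^n})$. Each player then applies a Hadamard and measures in the computational basis to obtain $y_i$; a direct calculation of the resulting distribution shows that $\bigoplus_i y_i=f(x)$ with certainty. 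Each player broadcasts the single bit $y_i$, for a total of exactly $n$ bits, and this is precisely the protocol structure claimed: a round of purely local measurements followed by one round of classical broadcast.

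For the classical lower bound, my plan is a counting argument in the broadcast model. A deterministic protocol using $c$ bits of total broadcast produces at most $2^c$ transcripts, and since every party must recover $f(x)$ from the transcript together with his own input, one may partition the promise inputs by transcript and isolate the dependence on each $x_i$ separately. The idea is to exhibit a suitably rich family of promise inputs on which $f$ flips under controlled local perturbations of the $x_i$: no single transcript can be consistent with too many of them without forcing some party to fail, and carrying the estimate through yields a count of order $N^n/n^n$ transcripts, whose logarithm is the desired $n\log_2 n - n$. Extending to a randomised protocol is done by the standard Yao minimax argument against the uniform distribution on the promise set.

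The main obstacle is precisely this combinatorial estimate. Making the intuition that each party must effectively ``announce'' the contribution of his input modulo a factor of order $n$ into a clean counting statement in the broadcast model, while carefully respecting the promise, is the technical heart of the argument and the place where the reference \cite{Tappa:1999A} does the real work; the quantum half is short and self-contained.
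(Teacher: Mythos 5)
The paper does not actually prove this theorem: it is imported verbatim from the cited reference \cite{Tappa:1999A} (Buhrman, van Dam, H{\o}yer and Tapp), so there is no in-paper argument to compare yours against. That said, your reconstruction is the right one. The promise function $f(x)=(2/N)\sum_i x_i \bmod 2$ on the set $\sum_i x_i\equiv 0\pmod{N/2}$ is exactly the function used there, and your quantum half is complete and correct: the local phases $\mathrm{diag}(1,e^{2\pi i x_i/N})$ turn the GHZ state into $\tfrac{1}{\sqrt{2}}(\ket{0^n}+(-1)^{f(x)}\ket{1^n})$ under the promise, the Hadamards put the parity of the measurement outcomes equal to $f(x)$ with certainty, and broadcasting the $n$ outcome bits matches both the claimed cost and the claimed structure (local measurements, then classical messages).

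The classical half is where your sketch has a concrete problem beyond mere incompleteness. You say the counting should yield ``a count of order $N^n/n^n$ transcripts''; but $\log_2(N^n/n^n)=nk-n\log_2 n$, which is not $n\log_2 n - n$ and in fact is vacuous precisely in the regime $k$ close to $\log_2 n$ that the theorem allows. The number you need is a transcript count of order $(n/2)^n$, independent of $N$: intuitively each party must reveal about $\log_2 n - 1$ bits of his input (its residue modulo roughly $n$), for a total of $n\log_2 n - n$. In the original reference this is obtained by partitioning the promise set into the ``boxes'' $S_1\times\cdots\times S_n$ induced by a deterministic broadcast protocol and showing that any box on which the protocol is correct and constant can cover only a $(2/n)^n$ fraction of the promise inputs; that combinatorial bound is the real content, and your sketch does not yet supply it. Since the theorem is used here only as a black box, this does not affect the paper, but as a standalone proof attempt the lower-bound half is not yet an argument.
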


We now proceed with our main result of this section.

\begin{theo}
\label{theorem:simulation} The exact simulation of the $n$-party GHZ state requires at least \mbox{$n\log_2 n - 2n$} bits of classical communication.
\end{theo}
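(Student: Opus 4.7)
The plan is to derive the bound by contradiction with the communication complexity separation of Theorem~\ref{theorem:lower bound}. Specifically, I would suppose we had a classical protocol $\Pi$ that exactly simulates the $n$-party GHZ state using $c$ bits of broadcast communication, and show how to combine it with the quantum strategy of Theorem~\ref{theorem:lower bound} to obtain a purely classical protocol for the Boolean function $f$ using $c+n$ bits.

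The crucial structural fact we need is the last sentence of Theorem~\ref{theorem:lower bound}: the quantum strategy for $f$ consists of (i) a single round of \emph{local} measurements on the shared GHZ state (each player's measurement depending only on their own input), followed by (ii) an exchange of $n$ classical bits. Because step~(i) produces nothing more than a joint distribution of classical outcomes $(m_1,\dots,m_n)$ with a prescribed dependence on the players' choices of measurement, and because an exact GHZ simulator by definition reproduces exactly those joint distributions for any tuple of local measurements, we may replace step~(i) wholesale by an invocation of $\Pi$. The players then continue with step~(ii) exactly as in the quantum protocol, and the output value of $f$ is computed correctly.

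Counting bits, this compound protocol is purely classical and broadcasts $c+n$ bits in total. But Theorem~\ref{theorem:lower bound} also asserts that any purely classical protocol for $f$ needs at least $n\log_2 n - n$ bits of broadcast communication, so we obtain
\begin{equation*}
c + n \;\geq\; n\log_2 n - n,
\end{equation*}
which rearranges to $c \geq n\log_2 n - 2n$, as claimed.

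The only genuine subtlety — and what I would expect to be the main point to spell out carefully — is the justification that a GHZ simulator can be substituted for the measurement step of an arbitrary quantum strategy without changing the output distribution, including the technical matter of the broadcast communication model. Concretely, one should verify that the ``exact simulation'' in the statement indeed requires reproducing the joint distribution of outcomes for every choice of local projective (or POVM) measurements used in the protocol of Theorem~\ref{theorem:lower bound}, and that the $c$ bits broadcast during $\Pi$ compose additively with the $n$ bits broadcast during step~(ii), so that the resulting protocol fits the same broadcast model in which the $n\log_2 n - n$ lower bound is stated. Once this is established, the arithmetic is immediate.
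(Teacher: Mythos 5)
Your proposal is correct and follows essentially the same route as the paper: substitute the exact GHZ simulator for the shared entanglement in the protocol of Theorem~\ref{theorem:lower bound}, add the $n$ bits of that protocol's classical phase, and compare against the classical lower bound to get $C(n)+n\geq n\log_2 n-n$. The paper's own proof is terser and leaves implicit the point you rightly flag -- that the quantum strategy's first step is a round of local measurements, which is exactly what an exact simulator reproduces -- so your version is, if anything, more carefully justified.
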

\begin{proof}
Let $C(n)$ be the quantity that we wish to lower bound.
Suppose it is possible to simulate a GHZ state. Then the communication complexity task of Theorem~\ref{theorem:lower bound} could be achieved by simulating the GHZ state with $C(n)$ classical bits and then communicating $n$ classical bits as in Theorem~\ref{theorem:lower bound}. Specifically:
\begin{align*}
C(n) + n &\geq n \log_2 n -n \\
C(n) &\geq n \log_2 n -2n \qedhere
\end{align*}
\end{proof}


\subsection{A Necessary Condition for GHZ State Simulation}
\label{sec:condition}

As mentioned, the possibility of GHZ state simulation with bounded communication is an open problem. Here, we give a step towards solving this problem:  a simple communication complexity task that is possible to solve \emph{if} GHZ entanglement simulation is possible. This implies that if we can somehow show that this simple task is impossible to accomplish, then the general task of GHZ simulation would also be impossible. We believe that this task somehow captures the essence of GHZ state simulation, and would be surprised if it turns our that the task is achievable, whereas the general GHZ state simulation is not.
Our new task can easily be generalized to~$n$ parties and is given by the following:

\begin{prob}
\label{problem:simulation}
Let players $P_1$, $P_2$ and $P_3$ share a random variable $\lambda$ where $0<\lambda<1$ (i.e.~the players share  unbounded random variables). A dealer gives each player an angle,  $\theta_1$, $\theta_2$ and $\theta_3$ respectively.
The goal is for the players to individually (without communication) send a message of constant length to a receiver who, after receiving all three messages, must output the value $1$ with  probability exactly \mbox{$\cos^2(\theta_1+\theta_2+\theta_3)$} and  $0$ with probability exactly  \mbox{$\sin^2(\theta_1+\theta_2+\theta_3)$}.
\end{prob}

\begin{theo}
\label{theorem:necessary condition}
The exact classical simulation of the GHZ state cannot be achieved if no protocol for Problem~\ref{problem:simulation} exists.
\end{theo}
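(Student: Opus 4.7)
The plan is to prove the contrapositive: starting from any exact classical simulation of the 3-party GHZ state, I would construct a protocol solving Problem~\ref{problem:simulation}. The first ingredient is a standard quantum fact: if each party applies the phase $\ket{0}\mapsto\ket{0},\ \ket{1}\mapsto e^{i2\theta_i}\ket{1}$ followed by a Hadamard and then measures in the computational basis, then a direct calculation on $\ket{+^3}$ shows that the outcomes $y_1,y_2,y_3\in\{0,1\}$ satisfy $\Pr[y_1\oplus y_2\oplus y_3=0]=\cos^2(\theta_1+\theta_2+\theta_3)$, with the complementary event having probability $\sin^2(\theta_1+\theta_2+\theta_3)$. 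So an exact GHZ simulation, fed the measurement parameters $2\theta_i$, produces bits whose complemented parity has exactly the target distribution of Problem~\ref{problem:simulation}.

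Next, I would take such a simulation protocol $\Pi$ and convert it into a form where each player sends only a single constant-length message to the receiver, with no player-to-player communication. The key observation is that $\Pi$ uses shared randomness together with a bounded amount of classical communication, so its execution is governed by a protocol tree of constant size whose leaves determine $(y_1,y_2,y_3)$. Identifying the shared variable $\lambda$ of Problem~\ref{problem:simulation} with the shared randomness of $\Pi$, each player $P_i$ can pre-compute their complete strategy: at every tree node where they would broadcast, what that broadcast would be as a function of the transcript so far, and at every leaf, what outcome $y_i$ they would report. All of this data fits into a single string $s_i$ whose size depends only on $\Pi$ and not on $\theta_i$ or $\lambda$, hence is of constant length in the sense of Problem~\ref{problem:simulation}.

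The protocol for Problem~\ref{problem:simulation} is then the following: each $P_i$ sends $s_i$ to the receiver; the receiver simulates $\Pi$ by walking the protocol tree, at each internal node consulting the appropriate $s_i$ to determine the next broadcast, and upon reaching a leaf reads off $y_1,y_2,y_3$ from the three strategies and outputs $1\oplus y_1\oplus y_2\oplus y_3$. Correctness follows immediately: the joint distribution of $(y_1,y_2,y_3)$ reconstructed by the receiver is exactly the one produced by $\Pi$, which by hypothesis matches the GHZ measurement statistics described in the first paragraph, so the receiver's output is $1$ with probability $\cos^2(\theta_1+\theta_2+\theta_3)$ as required.

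The main obstacle will be the passage from ``bounded classical communication of $\Pi$'' to ``constant-length $s_i$''. This is transparent when $\Pi$'s communication is bounded in the worst case, since the protocol tree then has size independent of the angles and each strategy is a genuine bit string of constant length; a more permissive notion of simulation (e.g.\ bounded only in expectation, or where individual messages encode continuous functions of the real-valued $\theta_i$) would require a more delicate encoding argument, and is where I would expect to spend the most effort if a fully general statement were wanted.
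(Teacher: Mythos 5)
Your proposal is correct and follows essentially the same route as the paper: prove the contrapositive, observe that phase rotations followed by Hadamards and computational-basis measurements on $\ket{+^3}$ yield outcome parity distributed as $\cos^2(\theta_1+\theta_2+\theta_3)$ versus $\sin^2(\theta_1+\theta_2+\theta_3)$, and then collapse the interactive simulation into one constant-length message per player by having each player ship their entire contingent strategy so the receiver can replay the protocol tree (the paper phrases this as ``all communication paths are followed simultaneously''). Your write-up is in fact somewhat more explicit than the paper's about the protocol-tree encoding and about the worst-case versus expected communication caveat, but the underlying argument is the same.
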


\begin{proof}
We show the contrapositive of the statement: if an entanglement simulation protocol for the $n$-party GHZ state exists, then a protocol for  Problem~\ref{problem:simulation} exists.

Consider the following scenario: the participants initially start with a three-party GHZ state. Each party receives as input an angle $\theta_1, \theta_2$ and $\theta_3$, respectively. Each participant~$i$ applies
 \begin{equation}
P_i=\left(
   \begin{array}{cc}
     1 & 0 \\
     0 & e^{2\theta_i\sqrt{\text{-}1}} \\
   \end{array}
 \right), \end{equation}
followed by a Hadamard transform,~$H$. The resulting state just before the Hadamard transform is:
\begin{equation}
\frac{1}{\sqrt{2}} \ket{000} + \frac{e^{2 (\theta_1 + \theta_2 + \theta_3)\sqrt{\text{-}1}}}{\sqrt{2}}\ket{111}\,.
\end{equation}
Each participant measures in the computational basis and outputs the result. A simple calculation reveals that the sum of the outputs is even with probability \mbox{$\cos^2(\theta_1 + \theta_2 + \theta_3)$}, while the sum of the outputs is odd with probability $\sin^2(\theta_1 + \theta_2 + \theta_3)$.

Thus, any protocol to simulate the GHZ state must be able to simulate the above scenario. A simulation usually involves bounded classical interaction; in order to achieve the goal of Problem~\ref{problem:simulation}, all communication paths are followed simultaneously, with the receiver choosing the final correct path and computing the parity of the player's output bits.
\end{proof}

\section{Conclusion and Discussion}

We have seen how the GHZ state gives rise to an elegant and efficient quantum secret sharing protocol with purely classical communication during the reconstruction phase. Because we have significantly lowered the quantum memory requirements, our protocol may be within reach of experimental implementations. We have also shown that \emph{if} the classical simulation of the GHZ state is feasible, then it requires at least  $n\log_2 n -2n$ bits of communication. The question of whether this simulation can really be done is still open, but we have given a potential method to prove the impossibility: if we can show that Problem~\ref{problem:simulation} is impossible to achieve, then we will know that the GHZ state simulation is impossible to achieve perfectly with bounded communication. If it turns out the Problem~\ref{problem:simulation} is achievable, then we will have evidence of the possibility of GHZ state simulation.

\section{Acknowledgements}
We thank Peter H{\o}yer and Damian Markham for insightful discussions. This work was partially supported by generous funding from  \textsc{Cifar}, \textsc{Mitacs} and \textsc{Nserc}.

\newpage
\bibliographystyle{plain}
\bibliography{references}

\begin{thebibliography}{10}

\bibitem{Bell64}
J.~Bell.
\newblock On the {E}instein-{P}odolsky-{R}osen paradox.
\newblock {\em Physics}, 1:195--200, 1964.

\bibitem{teleport}
C.{\,}~H. Bennett, G.~Brassard, C.~Cr{\'e}peau, R.~Jozsa, A.~Peres, and
  W.{\,}K. Wootters.
\newblock Teleporting an unknown quantum state via dual classical and
  {E}instein-{P}odolsky-{R}osen channels.
\newblock {\em Physical Review Letters}, 70:1895--1899, 1993.

\bibitem{Blakley}
G.{\,}R. Blakley.
\newblock Safeguarding cryptographic keys.
\newblock In {\em Proceedings of the {AFIPS} National Computer Conference},
  pages 313--317, 1979.

\bibitem{AnonQuantum}
G.~Brassard, A.~Broadbent, J.~Fitzsimons, S.~Gambs, and A.~Tapp.
\newblock Anonymous quantum communication.
\newblock In {\em Proceedings of the 13th Annual International Conference on
  the Theory and Application of Cryptology \& Information Security (ASIACRYPT
  2007)}, pages 460--473, 2007.

\bibitem{BT08}
A.~Broadbent and A.~Tapp.
\newblock Can quantum mechanics help distributed computing?
\newblock In {\em {ACM} SIGACT News, Distributed Computing Column~31},
  volume~39, pages 67--76. 2008.

\bibitem{BHMR03}
H.~Buhrman, P.~H{\o}yer, S.~Massar, and H.~R{\"o}hrig.
\newblock Combinatorics and quantum nonlocality.
\newblock {\em Physical Review Letters}, 91:047903 [4 pages], 2003.

\bibitem{Tappa:1999A}
H.~Buhrman, W.~{van Dam}, P.~H{\o}yer, and A.~Tapp.
\newblock Multiparty quantum communication complexity.
\newblock {\em Physical Review~A}, 60:2737--2741, 1999.

\bibitem{CGMP05}
N.~Cerf, N.~Gisin, S.~Massar, and S.~Popescu.
\newblock Simulating maximal quantum entanglement without communication.
\newblock {\em Physical Review Letters}, 94:220403 [4~pages], 2005.

\bibitem{QuantumSS}
R.~Cleve, D.~Gottesman, and H.-K. Lo.
\newblock How to share a quantum secret.
\newblock {\em Physical Review Letters}, 83:648--651, 1999.

\bibitem{GHZ}
D.{\,}M. Greenberger, M.{\,}A. Horne, and A.~Zeilinger.
\newblock Going beyond {B}ell's theorem.
\newblock In {\em Bell's Theorem, Quantum Theory, and Conceptions of the
  Universe}, pages 69--72, 1989.

\bibitem{MS08}
D.~Markham and B.{\,}C. Sanders.
\newblock Graph states for quantum secret sharing.
\newblock Available as: \url{arXiv:0808.1532v1 [quant-ph]}, 2008.

\bibitem{Mermin90}
N.{\,}D. Mermin.
\newblock Extreme quantum entanglement in a superposition of macroscopically
  distinct states.
\newblock {\em Physical Review Letters}, 65:1838--1840, 1990.

\bibitem{PR}
S.~Popescu and D.~Rohrlich.
\newblock Quantum nonlocality as an axiom.
\newblock {\em Foundations of Physics}, 24:379--385, 1994.

\bibitem{Shamir}
A.~Shamir.
\newblock How to share a secret.
\newblock {\em Communications of the {ACM}}, 22:612--613, 1979.

\bibitem{TonerBacon}
B.{\,}F. Toner and D.~Bacon.
\newblock The communication cost of simulating {B}ell correlations.
\newblock {\em Physical Review Letters}, 91:187904 [4~pages], 2003.

\end{thebibliography}
\end{document}